\begin{document}

\markboth{L. Liu et al.}{Evolutionary dynamics of cooperation in a population}

%
\catchline{}{}{}{}{}
%

\title{Evolutionary dynamics of cooperation in a population with probabilistic corrupt enforcers and violators}

\author{Linjie Liu}
\address{School of Mathematical
Sciences, University of Electronic Science and Technology of China,
Chengdu 611731, China}

\author{Xiaojie Chen*}
\address{School of Mathematical
Sciences, University of Electronic Science and Technology of China,
Chengdu 611731, China\\
xiaojiechen@uestc.edu.cn\\ $*$Corresponding author}

\author{Attila Szolnoki}
\address{Institute of Technical Physics and Materials Science, Centre for Energy Research, Hungarian Academy of Sciences, P.O. Box 49, H-1525 Budapest, Hungary\\szolnoki.attila@energia.mta.hu
}

\maketitle


\begin{abstract}
Pro-social punishment is a key driver of harmonious and stable society. However, this institution is vulnerable to corruption since law-violators can avoid sanctioning by paying bribes to corrupt law-enforcers. Consequently, to understand how altruistic behavior survives in a corrupt environment is an open question. To reveal potential explanations here we introduce corrupt enforcers and violators into the public goods game with pool punishment, and assume that punishers, as corrupt enforcers, may select defectors probabilistically to take a bribe from, and meanwhile defectors, as corrupt violators, may select punishers stochastically to be corrupted. By means of mathematical analysis, we aim to study the necessary conditions for the evolution of cooperation in such corrupt environment. We find that cooperation can be maintained in the population in two distinct ways. First, cooperators, defectors, and punishers can coexist by all keeping a steady fraction of the population. Second, these three strategies can form a cyclic dominance that resembles a rock-scissors-paper cycle or a heteroclinic cycle. We theoretically identify conditions when the competing strategies coexist in a stationary way or they dominate each other in a cyclic way. These predictions are confirmed numerically.
\end{abstract}

\keywords{Pro-social punishment; Corruption; Bribery; Replicator dynamics; Heteroclinic cycle.}

\ccode{AMS Subject Classification: 91A22, 91A13, 91B18}

\section{Introduction}

Understanding the emergence and persistence of altruistic behavior among selfish individuals has long been an enormous challenge for research community~\cite{fehr2003nature,Szolnoki2007Cooperation,santos2008social,santos2012role,Dolfin14,Perc2017Statistical,Allen18,chen18,szolnokipre18}.
This conundrum of collective behaviors has been often studied in social-economical systems where different mathematical tools including the kinetic theory of active particles and agent based modeling can be used~\cite{Hamilton1963The,Bellomo09,Bellomo11,Pareschi2013,Herrero15,Marsan16,Burini16,Bellomo16,Dolfin17,Bellomo17}. As an alternative approach, evolutionary game theory has also been applied to model individuals interactions at a microscopic scale for understanding the dynamics of collective actions of cooperation in large systems of interacting entities~\cite{Tanimoto07,Szolnoki2018,Perc2008Social,Perc2010Coevolutionary,fu2010invasion,Santos12,Vasconcelos13,Vasconcelos15}. Particularly, public goods game (PGG), as a standard metaphor of the mentioned social dilemma, has attracted a lot of attention from broad range of research disciplines~\cite{milinski2002reputation,hauert2002volunteering,Szolnoki2009Resolving,hilbe18,he2018amc,Wu18}. In the PGG, individuals decide whether to contribute to the common pool or not, and the accumulated and enhanced contribution are distributed to group members equally. Thus defectors who contribute nothing will obtain a higher payoff, no matter what other players do. If left controlled, each individual prefers to defect in the game leading to the collapse of cooperation~\cite{hardin1968tragedy}.

In order to solve the problem of cooperation in the game, an intensive research efforts have been carried out during the past few decades~\cite{nowak2006five,Fu2008Reputation,santos2011risk,Szolnoki2012Evolutionary,chen2012risk,Sasaki2012The}.
One prominently discussed solution is the employment of pro-social punishment, that is, punishing uncooperative individuals by lowering their income~\cite{rockenbach2006efficient,Szolnoki2011Phase,sasaki2013evolution,Szolnoki2013Effectiveness,chen2014probabilistic,Chenxj2015,Liu2017Competitions,szolnoki2017alliance,Liu2018Chaos,Yang2018}. In particular, two different ways of punishment are studied, namely, peer punishment and pool punishment. The former refers to that group individuals can impose fines on the violators directly and its additional cost~\cite{Fehr02,Andreoni03,rockenbach2006efficient}. The latter means that players can decide whether to contribute to a punishment pool before contributing to the common pool of the basic game and sanctioning defection is organized centrally~\cite{Sigmund2010nature,Szolnoki2011Phase}. This way of punishment is widespread and generally preferred in modern human societies~\cite{Sigmund2010nature}.

While punishment can solve the original dilemma of cooperation, the effectiveness of punishment in promoting cooperation has been challenged by recent theoretical research showing that the existence of corruption where defectors bribe corrupt officials to avoid punishment can destroy the positive consequence of costly punishment in cooperation~\cite{Abdallah14,Verma2015,Lee2017,Muthukrishna2017,Huang2018}.
For example, Muthukrishna et al.~\cite{Muthukrishna2017} experimentally showed that the possibility of corruption can cause a significant fall in public good provisioning and make empowering leaders decrease cooperative contributions.

It is worth mentioning that in most of the studies involving corruption, it is always assumed that defectors bribe corrupt enforcers and meanwhile enforcers who are involved with corruption take the bribes permanently. Indeed this assumption is not always justified in realistic situations where officials who are involved in corruption do not accept all the offered bribes and meanwhile civilians who violate the rules are not always going to offer a bribe to the officials. In other words, corrupt officials and defective civilians may act stochastically and offer or take bribes occasionally.

Motivated by the above mentioned observations, in this work we thus introduce corrupt enforcers and violators into the PGG with pool punishment, and we assume that defectors bribe enforcers probabilistically to avoid punishment and meanwhile pool punishers accept a bribe from defectors also stochastically. By applying the replicator equation approach we assume infinite population where the evolutionary game dynamics is used. Our theoretical analysis reveals, which is also confirmed by numerical calculations, that the cooperative behavior can be maintained in the population in two different ways. First, cooperators, defectors, and punishers can coexist where the portions of all strategies are stable in time. Alternatively, the three strategies can dominate each other in a cyclic way similarly to the well known rock-scissors-paper cycle or the heteroclinic cycle.

\section{Model and method}\label{section2}
\subsection{Public goods game}
We consider the PGG played in an infinite well-mixed population. At each time step, $N$ individuals are randomly chosen to form a group to participate in the PGG. Each individual can choose to cooperate or defect. Cooperators ($C$) contribute to the common pool at a cost $c$, while defectors ($D$) contribute nothing. The sum of contributions is multiplied by a factor $r$ ($1<r<N$) and then the enhanced amount is divided equally among all group members.

We then introduce a third strategist, pool punishers ($P$), who contribute $G$ to the punishment pool before contributing $c$ to the common pool. They are responsible for monitoring the entire population and punishing those individuals who are unwilling to contribute to the common pool. Alternatively, some defectors might be tempted to bribe the enforcers, so as to enable them to avoid punishment.
In the same way, some enforcers can choose to accept bribes and do not punish the bribers. Here we assume that punishers decide to accept the bribe $b$ from corrupt defectors with probability $p$.
With $1-p$ probability they do not accept any bribes and punish defectors in the group. On the other hand, we assume that defector will pay the cost $h$ to bribe the corrupt punishers with probability $q$, while with $1-q$ probability they are unwilling to do this. Thus, a defector who does not bribe will receive a fine $B (B\geq h)$ from every punisher no matter whether he/she is willing to accept bribes.

As a result, the payoffs of cooperators, defectors, and punishers from one PGG can be respectively
written as
\begin{eqnarray}
\pi_{C}&=&\frac{rc(N_{C}+N_{P}+1)}{N}-c,  \\
\pi_{D}&=&\frac{rc(N_{C}+N_{P})}{N}-(1-pq)BN_{P}-pqN_{P}h,  \\
\pi_{P}&=&\frac{rc(N_{C}+N_{P}+1)}{N}-c-G+pqN_{D}b,
\end{eqnarray}
where $N_{C}, N_{D}$, and $N_{P}$ respectively represent the number of cooperators, defectors, and pool punishers among the other $N-1$ group members, and $(1-pq)B$ denotes the expected fine from each punisher.

\subsection{Replicator equation}

We apply replicator equations to study the evolutionary dynamics of strategies in our model~\cite{Santos12,Nowak04S,hofbauer1998evolutionary,wang2018}. We denote by $x, y,$ and $z$ the frequencies of $C, D$, and $P$, respectively. Thus, $x, y, z\geq0$ and $x+y+z=1$. The replicator equations are written as
\begin{eqnarray}\label{system1}
\left\{
\begin{aligned}
\dot{x}&=x(P_{C}-\bar{P}),  \\
\dot{y}&=y(P_{D}-\bar{P}),  \\
\dot{z}&=z(P_{P}-\bar{P}),
\end{aligned}
\right.
\end{eqnarray}
where $P_{C}, P_{D}$, and $P_{P}$ denote the expected payoffs of $C, D,$ and $P$, respectively, and $\bar{P}=xP_{C}+yP_{D}+zP_{P}$ gives the average payoff of the entire population. Accordingly, the expected payoffs of the above three strategies can be written as
\begin{eqnarray}\label{equation2.5}
P_{i}=\sum_{N_{C}=0}^{N-1}\sum_{N_{D}=0}^{N-N_{C}-1}\binom{N-1}{N_{C}}\binom{N-N_{C}-1}{N_{D}}x^{N_{C}}y^{N_{D}}z^{N-N_{C}-N_{D}-1}\pi_{i},
\end{eqnarray}
where $i=C, D,$ or $P$.

In the next section, we examine the evolutionary dynamics for the mentioned three strategies. Particulary, we analyze the distribution and stability of equilibrium points in the following section.

\section{Theoretical analysis}\label{section3}

\subsection{Equilibrium points}
We consider the replicator dynamics for cooperators ($C$), defectors ($D$), and pool punishers ($P$), with the frequencies $x, y$, and $z$, respectively.
We can get the expected payoffs of these three strategies by simplifying the formula presented in Eq.~(\ref{equation2.5}) as follows:
\begin{eqnarray}
P_{C}&=&\frac{rc}{N}(N-1)(x+z)+\frac{rc}{N}-c,  \\
P_{D}&=&\frac{rc}{N}(N-1)(x+z)-(1-pq)B(N-1)z-pq(N-1)zh,\\
P_{P}&=&\frac{rc}{N}(N-1)(x+z)+\frac{rc}{N}-c-G+pq(N-1)yb,
\end{eqnarray}
where $(N-1)(x+z)$ denotes the expected numbers of contributors among the $N-1$ co-players, and $B(N-1)z$ gives the expected fine on a defector. The detailed analysis of the equilibrium points is shown in Theorem~\ref{theorem3.1}.

\begin{theorem}\label{theorem3.1}
The system described by~(\ref{system1}) has at most five equilibria $(x, y, z)=(1, 0, 0), (0, 1, 0), (0, 0, 1), (0, 1-\alpha, \alpha)$, and $(1-\beta-\mu, \mu, \beta)$. Here we introduce the abbreviation $\alpha=\frac{rc/N-c-G+pq(N-1)b}{(B+b-h)(N-1)pq-B(N-1)}$, $\beta=\frac{c-rc/N}{(1-pq)B(N-1)+pqh(N-1)}$, and $\mu=\frac{G}{pq(N-1)b}$ \label{T3.1}.
\end{theorem}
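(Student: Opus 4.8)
The plan is to locate all rest points of the replicator system~(\ref{system1}) on the simplex $\Delta = \{(x,y,z) : x,y,z \ge 0,\ x+y+z=1\}$ by systematically going through the faces of $\Delta$ and then its interior. Since $x+y+z=1$ is invariant, it suffices to solve $P_C = P_D = P_P$ (all equal to $\bar P$) wherever the corresponding coordinate is positive, together with the boundary cases where one or two coordinates vanish. First I would record the three vertices $(1,0,0)$, $(0,1,0)$, $(0,0,1)$, which are always equilibria because at a vertex only one strategy is present so $P_i = \bar P$ trivially. Next I would examine the three edges. On the edge $x=0$ (no cooperators), setting $P_D = P_P$ with $y = 1-z$ gives a single linear equation in $z$; solving it yields $z = \alpha$ and hence the point $(0, 1-\alpha, \alpha)$, with $\alpha$ exactly the expression displayed in the theorem (the numerator comes from $\tfrac{rc}{N}-c-G$ plus the bribe income term $pq(N-1)b$ evaluated at $y=1-z$, appropriately rearranged, and the denominator from collecting the $z$-coefficients of $P_D - P_P$). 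On the edges $y=0$ and $z=0$ one checks that $P_C - P_P = G - pq(N-1)yb$ and $P_C - P_D$ reduce to conditions with no admissible interior solution, so these edges contribute no new equilibria beyond their endpoints.

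For the interior equilibrium I would impose $P_C = P_P$ and $P_C = P_D$ simultaneously with $x,y,z>0$. The equation $P_C = P_P$ reads $\tfrac{rc}{N}-c = \tfrac{rc}{N}-c-G+pq(N-1)yb$, i.e.\ $G = pq(N-1)yb$, which immediately fixes $y = \mu := \tfrac{G}{pq(N-1)b}$ independently of $x$ and $z$. The equation $P_C = P_D$ reads $\tfrac{rc}{N}-c = -(1-pq)B(N-1)z - pq(N-1)zh$, i.e.\ $c - \tfrac{rc}{N} = \bigl[(1-pq)B(N-1) + pqh(N-1)\bigr]z$, which fixes $z = \beta$ as in the statement. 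Then $x = 1 - y - z = 1 - \beta - \mu$, giving the last equilibrium $(1-\beta-\mu, \mu, \beta)$.

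Finally I would argue the "at most five" count: the three conditions $P_C=P_P$, $P_C=P_D$, $P_D=P_P$ are pairwise affine in $(x,y,z)$ on the constraint plane, and any two of them are independent (their coefficient structure differs because $P_C=P_P$ is independent of $z$ while $P_C=P_D$ is independent of $y$), so the full equality $P_C=P_D=P_P$ has a unique solution $(1-\beta-\mu,\mu,\beta)$ when that point is nondegenerate; on each one-dimensional face at most one additional root appears, and it does so only on $x=0$. The main obstacle I anticipate is not the algebra but the bookkeeping: one must be careful that the simplification of Eq.~(\ref{equation2.5}) used to obtain the displayed $P_C, P_D, P_P$ is correct, and one must phrase the result as "\emph{at most} five" since for some parameter ranges $\alpha$, $\beta$, or $1-\beta-\mu$ may fall outside $[0,1]$ (or a denominator may vanish), in which case the corresponding point is not a genuine equilibrium in $\Delta$ and the count drops. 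I would close by noting that the feasibility conditions on $\alpha,\beta,\mu$ are exactly the inequalities that will be analyzed in the stability discussion that follows.
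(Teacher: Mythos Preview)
Your proposal is correct and follows essentially the same route as the paper: enumerate the three vertices, check each edge by reducing the replicator equation to a single payoff comparison (finding that only the $DP$ edge admits an interior root $z=\alpha$), and obtain the interior equilibrium by solving $P_C=P_P$ for $y=\mu$ and $P_C=P_D$ for $z=\beta$. Your explicit affine-independence remark justifying the ``at most five'' count is a small addition that the paper leaves implicit in its case-by-case enumeration, but otherwise the arguments coincide.
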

\begin{proof}
It is easy to obtain the three vertex fixed points of the above system, namely, $(0, 0, 1), (1, 0, 0)$, and $(0, 1, 0)$. In the following, we analyze the boundary and interior equilibria of the system in detail.

We first investigate the interior equilibrium points in the simplex $S_{3}$. Solving $P_{C}=P_{D}$ results in $z=\beta$. Similarly, by solving $P_{C}=P_{P}$, we have $y=\mu$. Thus, there is an interior equilibrium point $(1-\beta-\mu, \mu, \beta)$ when $\mu<1$, $\beta<1$, and $1-\beta-\mu>0$.

Then we investigate the fixed points on each edge of the simplex $S_{3}$. On the edge $CD$ we have $z=0$, resulting in $\dot{y}=y(1-y)(P_{D}-P_{C})=y(1-y)(c-\frac{rc}{N})>0$. As a result, the system evolves from the $C$ state towards the $D$ state.

On the $CP$ edge we have $y=0$, resulting in $\dot{x}=x(1-x)(P_{C}-P_{P})=x(1-x)G>0$. Thus the direction of the dynamics goes from $P$ toward $C$ state.

On the $DP$ edge we have $y+z=1$, and the replicator system changes to $\dot{z}=z(1-z)(P_{P}-P_{D})$. Solving $P_{P}=P_{D}$ results in $z=\alpha$. Thus there exists a boundary equilibrium point $(0, 1-\alpha,\alpha)$ when $0<\alpha<1$.
\end{proof}

\subsection{Stability analysis of equilibria}

Here we set that
\begin{eqnarray}
\left\{
\begin{aligned}\label{system2}
f(x,y)&=&x[(1-x)(P_{C}-P_{P})-y(P_{D}-P_{P})],\\
g(x,y)&=&y[(1-y)(P_{D}-P_{P})-x(P_{C}-P_{P})].
\end{aligned}
\right.
\end{eqnarray}
Then the Jacobian of the system is
\begin{equation}
J=\begin{bmatrix}
\frac{\partial{f(x,y)}}{\partial{x}} & \frac{\partial{f(x,y)}}{\partial{y}}\\
\frac{\partial{g(x,y)}}{\partial{x}} & \frac{\partial{g(x,y)}}{\partial{y}}
\end{bmatrix},
\end{equation}
where
\begin{eqnarray}
\left\{
\begin{aligned}
\frac{\partial{f(x,y)}}{\partial{x}}&=[(1-x)(P_{C}-P_{P})-y(P_{D}-P_{P})]+x[-(P_{C}-P_{P})\\
&+(1-x)\frac{\partial}{\partial{x}}(P_{C}-P_{P})-y\frac{\partial}{\partial{x}}(P_{D}-P_{P})],\\
\frac{\partial{f(x,y)}}{\partial{y}}&=x[(1-x)\frac{\partial}{\partial{y}}(P_{C}-P_{P})-(P_{D}-P_{P})-y\frac{\partial}{\partial{y}}(P_{D}-P_{P})],\\
\frac{\partial{g(x,y)}}{\partial{x}}&=y[(1-y)\frac{\partial}{\partial{x}}(P_{D}-P_{P})-(P_{C}-P_{P})-x\frac{\partial}{\partial{x}}(P_{C}-P_{P})],\\
\frac{\partial{g(x,y)}}{\partial{y}}&=[(1-y)(P_{D}-P_{P})-x(P_{C}-P_{P})]+y[-(P_{D}-P_{P})\\
&+(1-y)\frac{\partial}{\partial{y}}(P_{D}-P_{P})-x\frac{\partial}{\partial{y}}(P_{C}-P_{P})].
\end{aligned}
\right.
\end{eqnarray}

In the following, we study the stabilities of equilibria based on whether the system has an interior equilibrium point.

\subsubsection{The system (\ref{system1}) has an interior equilibrium point}\label{subsection3.2.1}
When $\beta<1$, $\mu<1$, and $\beta+\mu<1$, there exists an interior fixed point, namely, $(x, y, z)=(1-\beta-\mu, \mu, \beta)$. In the following we discuss the stability of this fixed point.

When $(1-pq)B+pqh-pqb<0$, the existing interior fixed point is stable.

(1) For $0<\alpha<1$, there is a boundary equilibrium point with $z=\alpha$ on the $DP$ edge. Thus the system has five fixed points in the parameter space, namely, $(0, 0, 1), (1, 0, 0), (0, 1, 0), (0, 1-\alpha, \alpha)$, and $(1-\beta-\mu, \mu, \beta)$.

For $(x, y, z)=(0, 0, 1)$, the Jacobian is
\begin{equation}
J=\begin{bmatrix}
G &&& 0\\
0 &&& -(1-pq)B(N-1)-pq(N-1)h-\frac{rc}{N}+c+G
\end{bmatrix},
\end{equation}
thus the fixed equilibrium is unstable since $G>0$.

For $(x, y, z)=(1, 0, 0)$, the Jacobian is
\begin{equation}
J=\begin{bmatrix}
-G &&& -(c+G-\frac{rc}{N})\\
0 &&& c-\frac{rc}{N}
\end{bmatrix},
\end{equation}
thus the fixed equilibrium is unstable since $r<N$.

For $(x, y, z)=(0, 1, 0)$, the Jacobian is
\begin{equation}
J=\begin{bmatrix}
\frac{rc}{N}-c &&& 0\\
-(G-pq(N-1)b) &&& \frac{rc}{N}-c-G+pq(N-1)b
\end{bmatrix},
\end{equation}
thus the fixed equilibrium is a saddle point and unstable since $\frac{rc}{N}-c-G+pq(N-1)b>0$.

For $(x, y, z)=(0, 1-\alpha, \alpha)$, the Jacobian is
\begin{equation}
J=\begin{bmatrix}
a_{11} & 0\\
a_{21} & a_{22}
\end{bmatrix},
\end{equation}
where $a_{11}=G-pq(N-1)yb, a_{21}=y(1-y)[pq(N-1)h+(1-pq)(N-1)B]-y[G-pq(N-1)yb]$, and $a_{22}=y(1-y)(N-1)[(1-pq)B+pqh-pqb]$, thus the fixed equilibrium is unstable since $G-pq(N-1)yb>0$.

For $(x,y,z)=(1-\beta-\mu, \mu, \beta)$, we define the equilibrium point as $(x^{*}, y^{*}, z^{*})$ hereafter, thus the elements in the Jacobian matrix are written as
\begin{eqnarray}
\left\{
\begin{aligned}
\frac{\partial{f}}{\partial{x}}(x^{*},y^{*})&=&x^{*}[(1-x^{*})\frac{\partial}{\partial{x}}(P_{C}-P_{P})-y^{*}\frac{\partial}{\partial{x}}(P_{D}-P_{P})],\\
\frac{\partial{f}}{\partial{y}}(x^{*},y^{*})&=&x^{*}[(1-x^{*})\frac{\partial}{\partial{y}}(P_{C}-P_{P})-y^{*}\frac{\partial}{\partial{y}}(P_{D}-P_{P})],\\
\frac{\partial{g}}{\partial{x}}(x^{*},y^{*})&=&y^{*}[(1-y^{*})\frac{\partial}{\partial{x}}(P_{D}-P_{P})-x^{*}\frac{\partial}{\partial{x}}(P_{C}-P_{P})],\\
\frac{\partial{g}}{\partial{y}}(x^{*},y^{*})&=&y^{*}[(1-y^{*})\frac{\partial}{\partial{y}}(P_{D}-P_{P})-x^{*}\frac{\partial}{\partial{y}}(P_{C}-P_{P})],
\end{aligned}
\right.
\end{eqnarray}
where
\begin{eqnarray}
\left\{
\begin{aligned}
\frac{\partial}{\partial{x}}(P_{C}-P_{P})&=0,\\
\frac{\partial}{\partial{y}}(P_{C}-P_{P})&=-pq(N-1)b,\\
\frac{\partial}{\partial{x}}(P_{D}-P_{P})&=(N-1)[pqh+(1-pq)B],\\
\frac{\partial}{\partial{y}}(P_{D}-P_{P})&=(N-1)[pqh+(1-pq)B-pqb].
\end{aligned}
\right.
\end{eqnarray}
Then we define that $p=\frac{\partial{f}}{\partial{x}}(x^{*},y^{*})\frac{\partial{g}}{\partial{y}}(x^{*},y^{*})-\frac{\partial{f}}{\partial{y}}(x^{*},y^{*})\frac{\partial{g}}{\partial{x}}(x^{*},y^{*})$ and $q=\frac{\partial{f}}{\partial{x}}(x^{*},y^{*})+\frac{\partial{g}}{\partial{y}}(x^{*},y^{*})$. Thus we have
\begin{eqnarray}
p&=&\frac{\partial{f}}{\partial{x}}(x^{*},y^{*})\frac{\partial{g}}{\partial{y}}(x^{*},y^{*})-\frac{\partial{f}}{\partial{y}}(x^{*},y^{*})\frac{\partial{g}}{\partial{x}}(x^{*},y^{*})\nonumber\\
&=&x^{*}y^{*}(1-x^{*}-y^{*})[\frac{\partial}{\partial{x}}(P_{C}-P_{P})\frac{\partial}{\partial{y}}(P_{D}-P_{P})\nonumber\\
&-&\frac{\partial}{\partial{y}}(P_{C}-P_{P})\frac{\partial}{\partial{x}}(P_{D}-P_{P})]\nonumber\\
&=&x^{*}y^{*}(1-x^{*}-y^{*})(N-1)^{2}pqb[pqh+(1-pq)B]\nonumber\\
&>&0,
\end{eqnarray}
and
\begin{eqnarray}
q&=&\frac{\partial{f}}{\partial{x}}(x^{*},y^{*})+\frac{\partial{g}}{\partial{y}}(x^{*},y^{*})\nonumber\\
&=&x^{*}(1-x^{*})\frac{\partial}{\partial{x}}(P_{C}-P_{P})+y^{*}(1-y^{*})\frac{\partial}{\partial{y}}(P_{D}-P_{P})\nonumber\\
&-&x^{*}y^{*}[\frac{\partial}{\partial{x}}(P_{D}-P_{P})+\frac{\partial}{\partial{y}}(P_{C}-P_{P})]\nonumber\\
&=&(N-1)[(1-pq)B+pqh-pqb]y^{*}(1-y^{*}-x^{*}).
\end{eqnarray}

Consequently for $(1-pq)B+pqh-pqb<0$ the interior fixed point is stable.

(2) For $\alpha\leq0$ or $\alpha\geq1$, the boundary equilibrium point of the $DP$ edge does not exist. Thus the system has four fixed points in the parameter space. The stability of $(0, 0, 1), (1, 0, 0),$ and $(1-\beta-\mu, \mu, \beta)$ will not change, compared with the case of $0<\alpha <1$. If $\alpha\geq1$, the fixed point $(0, 1, 0)$ is unstable since the largest eigenvalue of $J(0, 1, 0)$ is positive. If $\alpha<0$, the fixed point $(0, 1, 0)$ is stable. Particularly, for $\alpha=0$ we can prove that this fixed point is unstable by using the center manifold theorem~\cite{Carr81,Khalil96} (see Theorem.\ref{theorem3.3}).

\begin{theorem}\label{theorem3.3}
When $\frac{rc}{N}-c-G+pq(N-1)b=0$, the fixed point $(0, 1, 0)$ is unstable.
\end{theorem}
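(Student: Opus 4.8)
The plan is to invoke the center manifold theorem, since under the hypothesis $\frac{rc}{N}-c-G+pq(N-1)b=0$ the equilibrium $(0,1,0)$ is non-hyperbolic and the linearization alone is inconclusive. First I would translate $(0,1,0)$ to the origin, using $(x,z)$ as the two independent variables ($y=1-x-z$) and expanding the right-hand sides of~(\ref{system1}). At the origin $P_{C}-P_{D}=\frac{rc}{N}-c$ while $P_{P}-P_{D}=\frac{rc}{N}-c-G+pq(N-1)b=0$ by hypothesis, and since $\dot x=x(P_{C}-\bar P)$ and $\dot z=z(P_{P}-\bar P)$ carry overall factors $x$ and $z$ respectively, the Jacobian there is already diagonal, $\mathrm{diag}(\lambda,0)$ with $\lambda=\frac{rc}{N}-c<0$ (because $r<N$). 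Hence there is a one-dimensional stable direction ($x$) and a one-dimensional center direction ($z$).

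Next I would identify the center manifold explicitly rather than compute it order by order: the edge $DP=\{x=0\}$ is invariant for~(\ref{system1}) (immediate from $\dot x=x(P_{C}-\bar P)$, and already used in Theorem~\ref{theorem3.1}) and is tangent to the center eigenspace at the origin, so $\{x=0\}$ may be taken as the center manifold (the usual graph expansion $x=h(z)$ simply returns $h\equiv 0$). Restricting~(\ref{system1}) to it gives the scalar reduced equation $\dot z=z(1-z)(P_{P}-P_{D})|_{x=0}$; substituting the hypothesis into $P_{P}-P_{D}$ along this edge (where $y=1-z$) yields $P_{P}-P_{D}=(N-1)[(1-pq)B+pqh-pqb]\,z$, so that $\dot z=(N-1)[(1-pq)B+pqh-pqb]\,z^{2}(1-z)$.

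Finally I would read off stability: near $z=0$ the reduced flow is $\dot z=a z^{2}+O(z^{3})$ with $a=(N-1)[(1-pq)B+pqh-pqb]\neq 0$ (note $-a$ is exactly the denominator appearing in $\alpha$, which is nonzero whenever $\alpha$ is defined), and a scalar equation whose leading nonlinearity is a non-vanishing quadratic term has an unstable (at best semistable) equilibrium at the origin, so $z=0$ is not Lyapunov stable for the reduced equation; by the reduction principle attached to the center manifold theorem~\cite{Carr81,Khalil96}, $(0,1,0)$ is therefore unstable for~(\ref{system1}).

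The algebra producing $P_{P}-P_{D}$ on the edge is routine; the step that needs care is the center manifold reduction itself — justifying that the invariant edge $\{x=0\}$ legitimately serves as a center manifold and that instability of $(0,1,0)$ can be deduced from the one-dimensional flow restricted to it. Once the reduction theorem is invoked, the remaining work is the short computation above together with the elementary observation that $\dot z=az^{2}+\cdots$ with $a\neq0$ is unstable at $z=0$.
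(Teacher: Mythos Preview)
Your proposal is correct and follows the same overall strategy as the paper: reduce to $(x,z)$ coordinates, observe that the linearization at the origin is $\mathrm{diag}(\frac{rc}{N}-c,\,0)$, apply the center manifold theorem, and read off instability from the leading quadratic term $(N-1)[(1-pq)B+pqh-pqb]\,z^{2}$ of the reduced one-dimensional flow.

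The one genuine difference is how you obtain the center manifold. The paper carries out a (trivial) diagonalizing change of variables $(u,v)=(x,z)$, posits a graph $u=e(v)=O(|v|^{2})$, and substitutes to extract the $v^{2}$ coefficient. You instead note that the invariant edge $\{x=0\}$ is tangent to the center eigenspace and therefore \emph{is} an exact center manifold, so the reduced equation is simply the restriction of $\dot z$ to that edge. This shortcut is both cleaner and slightly more rigorous: it dispenses with the unjustified ansatz $e(v)=O(|v|^{2})$ (true here precisely because $e\equiv 0$ works) and avoids the redundant coordinate swap. Either way one lands on the same scalar equation $\dot z=(N-1)[(1-pq)B+pqh-pqb]\,z^{2}(1-z)$ and the same conclusion, under the same implicit nondegeneracy assumption $(1-pq)B+pqh-pqb\neq 0$.
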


\begin{proof}
Because of $y = 1 - x - z$, the dynamic equations (\ref{system2}) become
\begin{eqnarray}\label{system2.2}
\left\{
\begin{aligned}
\dot{x}&=x[(1-x)(P_{C}-P_{D})-z(P_{P}-P_{D})],\\
\dot{z}&=z[(1-z)(P_{P}-P_{D})-x(P_{C}-P_{D})],
\end{aligned}
\right.\label{2.3}
\end{eqnarray}
where
\begin{eqnarray}
P_{C}-P_{D}&=&\frac{rc}{N}-c+(1-pq)B(N-1)z+(N-1)pqzh,\\
P_{P}-P_{D}&=&\frac{rc}{N}-c-G+pq(N-1)(1-x-z)b\nonumber\\
&+&(1-pq)B(N-1)z+pq(N-1)zh.
\end{eqnarray}

We know that $(x, z)=(0, 0)$ is an equilibrium point of the equation system (\ref{system2.2}). Consequently the Jacobian is
\begin{equation}
A=\begin{bmatrix}
\frac{rc}{N}-c & 0\\
0 & \frac{rc}{N}-c-G+pq(N-1)b
\end{bmatrix}.
\end{equation}
When $\frac{rc}{N}-c-G+pq(N-1)b=0$, the eigenvalues of the Jacobian for the fixed point $(x, z)=(0, 0)$ are $0$ and $\frac{rc}{N}-c$. In this condition, we study the stability of the equilibrium point by further using the center manifold
theorem~\cite{Khalil96}. To do that, we construct a matrix $M$, whose column elements are the eigenvectors of the matrix $A$, given as
\begin{equation}
M=\begin{bmatrix}
0 &&& 1\\
1 &&& 0
\end{bmatrix}.
\end{equation}
Let $T=M^{-1}$, then we have
\begin{equation}
TAT^{-1}=\begin{bmatrix}
0 &&& 0\\
0 &&& \frac{rc}{N}-c
\end{bmatrix}.
\end{equation}
Using variable substitution, we have
\begin{equation}
\begin{bmatrix}
v \\
u
\end{bmatrix}=T\begin{bmatrix}
x\\
z
\end{bmatrix}=\begin{bmatrix}
0 &&& 1\\
1 &&& 0
\end{bmatrix}\begin{bmatrix}
x\\
z
\end{bmatrix}=\begin{bmatrix}
z\\
x
\end{bmatrix}.
\end{equation}
Therefore, the system (\ref{2.3}) can be rewritten as
\begin{eqnarray}
\left\{
\begin{aligned}
\dot{v}&=v(1-v)[\frac{rc}{N}-c-G+pq(N-1)(1-u-v)b+(1-pq)B(N-1)v\nonumber\\
&+pq(N-1)vh]-vu[\frac{rc}{N}-c+(1-pq)B(N-1)v+(N-1)pqvh],\\
\dot{u}&=u(1-u)[\frac{rc}{N}-c+(1-pq)B(N-1)v+(N-1)pqvh]-vu[\frac{rc}{N}-c\nonumber\\
&-G+pq(N-1)(1-u-v)b+(1-pq)B(N-1)v+pq(N-1)vh].
\end{aligned}
\right.
\end{eqnarray}
Using the center manifold theorem~\cite{Carr81,Khalil96}, we have that $u = e(v)$ is a center manifold for the above system.
Then the dynamics on the center manifold, namely, the dynamics of
\begin{eqnarray}\label{A35}
\dot{v}&=&v(1-v)[\frac{rc}{N}-c-G+pq(N-1)(1-e(v)-v)b+(1-pq)B(N-1)v\nonumber\\
&+&pq(N-1)vh]-ve(v)[\frac{rc}{N}-c+(1-pq)B(N-1)v+(N-1)pqvh],
\end{eqnarray}
determine the dynamics near the equilibrium point. Assuming that $e(v)=O(|v|^2)$, thus the system (\ref{A35}) can be expressed as
\begin{eqnarray}
\dot{v}=(v^{2}-v^{3})[-pq(N-1)b+(1-pq)(N-1)B+pq(N-1)h]+O(|v|^{4}).
\end{eqnarray}
Since $-pq(N-1)b+(1-pq)(N-1)B+pq(N-1)h\neq0$, thus we obtain that $v=0$ is unstable. Accordingly, the fixed point $(0, 1, 0)$ is unstable in the equation system (\ref{system1}).
\end{proof}

Next we study the case when $(1-pq)B+pqh-pqb=0$. Here the eigenvalues of the Jacobian matrix for the interior equilibrium point are pure imaginary. Besides, the boundary equilibrium point of the $DP$ edge does not exist. Hence the system has four fixed points in the parameter space, namely, $(0, 0, 1), (1, 0, 0), (0, 1, 0)$, and $(1-\beta-\mu, \mu, \beta)$. The three vertexes are unstable since the largest eigenvalues of the Jacobian matrices of these three fixed points are all positive.\label{case2}

\begin{theorem} \label{theorem3.2}
When $(1-pq)B+pqh-pqb=0$, the interior fixed point is a center surrounding by periodic closed orbits.
\end{theorem}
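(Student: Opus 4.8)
The plan is to exhibit an explicit first integral of the replicator flow and to read off the phase portrait from its level sets; the purely imaginary eigenvalues of the linearization at the interior equilibrium (computed in Section~\ref{subsection3.2.1}) are by themselves inconclusive, so a global argument is needed. First I would simplify the payoffs: since the term $\frac{rc}{N}(N-1)(x+z)$ occurs in $P_C$, $P_D$ and $P_P$ alike, and any summand common to all payoffs cancels in $P_i-\bar P$, one may replace them by the effective payoffs $\hat P_C=\frac{rc}{N}-c$, $\hat P_D=-\gamma z$, $\hat P_P=\frac{rc}{N}-c-G+\delta y$, where $\gamma=(N-1)[(1-pq)B+pqh]$ and $\delta=pq(N-1)b$. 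Writing $(x^{*},y^{*},z^{*})=(1-\beta-\mu,\mu,\beta)$ for the interior equilibrium, the equalities $\hat P_C=\hat P_D=\hat P_P$ at this point are exactly $\frac{rc}{N}-c=-\gamma z^{*}$ (i.e.\ $z^{*}=\beta$) and $G=\delta y^{*}$ (i.e.\ $y^{*}=\mu$).

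Next I would consider the function $H(x,y,z)=x^{*}\ln x+y^{*}\ln y+z^{*}\ln z$ on the interior of $S_{3}$ and differentiate along~(\ref{system1}). Using $x^{*}+y^{*}+z^{*}=1$,
\[
\dot H=x^{*}\frac{\dot x}{x}+y^{*}\frac{\dot y}{y}+z^{*}\frac{\dot z}{z}=\big(x^{*}\hat P_C+y^{*}\hat P_D+z^{*}\hat P_P\big)-\bar P .
\]
Substituting the effective payoffs, using $x+y+z=1$, and eliminating $\frac{rc}{N}-c$ and $G$ via the two equilibrium identities above, the constant terms and all linear terms cancel pairwise and the bilinear remainder collapses to
\[
\dot H=(\gamma-\delta)(y-y^{*})(z-z^{*})=(N-1)\big[(1-pq)B+pqh-pqb\big]\,(y-y^{*})(z-z^{*}),
\]
which vanishes identically under the hypothesis $(1-pq)B+pqh-pqb=0$. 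Hence $H$ is a constant of motion.

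It then remains to extract the phase portrait from the level sets of $H$. The function $H$ is strictly concave on $S_{3}$ (its Hessian is the diagonal matrix with entries $-x^{*}/x^{2},-y^{*}/y^{2},-z^{*}/z^{2}$, negative definite on the plane $x+y+z=1$), it tends to $-\infty$ towards $\partial S_{3}$, and its unique critical point in the interior of $S_{3}$ is $(x^{*},y^{*},z^{*})$, which is therefore its strict global maximum. Consequently, for every $k$ slightly below $H(x^{*},y^{*},z^{*})$ the superlevel set $\{H\ge k\}$ is a compact convex neighbourhood of the equilibrium, so the level curve $\{H=k\}$ is a simple closed curve encircling it. Each such curve is flow-invariant (as $\dot H\equiv0$), compact, and contains no equilibrium (the interior equilibrium being unique by Theorem~\ref{theorem3.1}); by the Poincar\'{e}--Bendixson theorem it is a periodic orbit. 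Thus the interior equilibrium is a center surrounded by periodic closed orbits.

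The only step needing care is the algebraic collapse of $\dot H$: one must track the constants $\frac{rc}{N}-c$ and $-G$ and the two bilinear terms $\gamma yz$ and $\delta yz$, and verify---using $\frac{rc}{N}-c=-\gamma z^{*}$ and $G=\delta y^{*}$---that everything except the single term proportional to $(y-y^{*})(z-z^{*})$ cancels. The remaining ingredients (strict concavity of $H$, the topology of convex superlevel sets, Poincar\'{e}--Bendixson) are routine; as a consistency check, the linearization of this first integral reproduces the pure imaginary eigenvalues obtained in Section~\ref{subsection3.2.1}.
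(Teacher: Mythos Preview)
Your argument is correct and takes a genuinely different route from the paper. The paper passes to the coordinates $\bigl(\varepsilon,z\bigr)$ with $\varepsilon=x/(x+y)$, observes that under the hypothesis $(1-pq)B+pqh-pqb=0$ the resulting planar system becomes separable, and integrates explicitly to obtain a first integral $H(\varepsilon,z)$; the conclusion that the level sets are closed curves is then stated. You instead invoke the classical replicator first integral $H=\sum_i x_i^{*}\ln x_i$ directly in the original barycentric coordinates, and your key identity $\dot H=(\gamma-\delta)(y-y^{*})(z-z^{*})$ with $\gamma-\delta=(N-1)[(1-pq)B+pqh-pqb]$ makes the role of the hypothesis completely transparent. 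The two constants of motion are in fact proportional (your $H$ equals $-\gamma^{-1}$ times the paper's $H$ after the change of variables $x=\varepsilon(1-z)$, $y=(1-\varepsilon)(1-z)$), so the underlying conservation law is the same. What your approach buys is (i) no need for the auxiliary variable $\varepsilon$ or the separation-of-variables integration; (ii) a cleaner justification that the level curves are simple closed Jordan curves, via the strict concavity of $H$ and the convexity of its superlevel sets; and (iii) the bonus that the same computation, with $\gamma\neq\delta$, immediately yields a strict Lyapunov function reproving the stability/instability dichotomy treated separately in Section~\ref{subsection3.2.1}. The paper's approach, by contrast, gives an explicit closed-form expression for the invariant in the reduced coordinates, which can be convenient for plotting or for quantitative estimates of the orbit shapes.
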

\begin{proof} We introduce a new variable
$\varepsilon=\frac{x}{x+y}$, which represents the fraction of cooperators
among individuals who do not contribute to the punishment pool. This
yields
\begin{eqnarray}
\dot{\varepsilon}=\frac{xy}{(x+y)^{2}}(P_{C}-P_{D})=-\varepsilon(1-\varepsilon)(P_{D}-P_{C}).
\end{eqnarray}
On the other hand, $\dot{z}=z(P_{P}-\bar{P})$, where
\begin{eqnarray}
\bar{P}&=&xP_{C}+yP_{D}+zP_{P}\nonumber\\
&=&x(P_{C}-P_{D})+(1-z)(P_{D}-P_{P})+P_{P}.
\end{eqnarray}
Thus we have
\begin{eqnarray}
\dot{z}=z[x(P_{D}-P_{C})-(1-z)(P_{D}-P_{P})].
\end{eqnarray}
\setlength{\arraycolsep}{5pt}
Accordingly the equation system becomes
\begin{eqnarray}
\left\{
\begin{aligned}
\dot{\varepsilon}&=\varepsilon(1-\varepsilon)[\frac{rc}{N}-c+(1-pq)B(N-1)z+(N-1)pqzh],\\
\dot{z}&=z(1-z)\{(1-\varepsilon)[\frac{rc}{N}-c+pq(N-1)(1-z)b\nonumber\\
&+(1-pq)B(N-1)z+pq(N-1)zh]-G\}.
\end{aligned}
\right.
\end{eqnarray}
The separability of the factors allows us to write
\begin{eqnarray*}
\frac{dz}{d\varepsilon}=\frac{z(1-z)}{\frac{rc}{N}-c+(1-pq)B(N-1)z+(N-1)pqzh}\frac{(1-\varepsilon)[\frac{rc}{N}-c+pq(N-1)b]-G}{\varepsilon(1-\varepsilon)},
\end{eqnarray*}
such that
\setlength{\arraycolsep}{0.0em}
\begin{eqnarray}
\int\frac{\frac{rc}{N}-c+(1-pq)B(N-1)z+(N-1)pqzh}{z(1-z)}dz\nonumber\\
=\int\frac{(1-\varepsilon)[\frac{rc}{N}-c+pq(N-1)b]-G}{\varepsilon(1-\varepsilon)}d\varepsilon.
\end{eqnarray}
\setlength{\arraycolsep}{5pt}
The integral of the right-hand side is
\begin{eqnarray}
[\frac{rc}{N}-c+pq(N-1)b]\log(\varepsilon)-G[\log(\varepsilon)-\log(1-\varepsilon)].
\end{eqnarray}
The integral of the left-hand side is
\begin{eqnarray}
(\frac{rc}{N}-c)[\log(z)-\log(1-z)]-[(1-pq)B(N-1)\nonumber\\
+(N-1)pqh]\log(1-z).
\end{eqnarray}

In this way, we identify the constant of motion
\begin{eqnarray}
H(\varepsilon,z)&=&[\frac{rc}{N}-c+pq(N-1)b]\log(\varepsilon)-G[\log(\varepsilon)-\log(1-\varepsilon)]\nonumber\\
&+&(\frac{rc}{N}-c)[\log(z)-\log(1-z)]-[(1-pq)B(N-1)\nonumber\\
&+&(N-1)pqh]\log(1-z).
\end{eqnarray}
Therefore, we have
\begin{eqnarray}
\dot{H}=\frac{\partial{H}}{\partial{\varepsilon}}\dot{\varepsilon}+\frac{\partial{H}}{\partial{z}}\dot{z}=0.
\end{eqnarray}
Accordingly, the system is conservative and all constant level sets of $H$
correspond to closed curves. Besides, the interior fixed point is neutrally stable surrounded by those closed and periodic orbits.
\end{proof}

Last, when $(1-pq)B+pqh-pqb>0$, the existing interior fixed point is unstable.\label{case3}

(1) For $0<\alpha<1$, there is a boundary equilibrium point on the $DP$ edge. Then the system has five fixed points in the parameter space, namely, $(0, 0, 1), (1, 0, 0), (0, 1, 0), (0, 1-\alpha, \alpha)$, and $(1-\beta-\mu, \mu, \beta)$.

The fixed points $(0, 0, 1), (1, 0, 0),$ and $(0, 1-\alpha, \alpha)$ are all unstable since the largest eigenvalues of the Jacobian matrices are all positive.

For $(x,y,z)=(0,1,0)$, the Jacobian is
\begin{equation}
J=\begin{bmatrix}
\frac{rc}{N}-c &&& 0\\
-(G-pq(N-1)b) &&& \frac{rc}{N}-c-G+pq(N-1)b
\end{bmatrix},
\end{equation}
thus this is a stable equilibrium point since $\frac{rc}{N}-c-G+pq(N-1)b<0$.

(2) For $\alpha\leq0$ or $\alpha\geq1$, the boundary equilibrium point of the $DP$ edge does not exist. Thus the system has four fixed points in the mentioned parameter space. The stability of $(0, 0, 1), (1, 0, 0),$ and $(1-\beta-\mu, \mu, \beta)$ will not change, compared with the case of $0<\alpha<1$. If $\alpha\geq1$, the fixed point $(0, 1, 0)$ is stable since the largest eigenvalue of $J(0, 1, 0)$ is positive. If $\alpha\leq0$, the fixed point $(0, 1, 0)$ is unstable.

\begin{theorem}\label{theorem3.4}
When $\frac{rc}{N}-c-G>\max\{-(N-1)[(1-pq)B+pqh],-pq(N-1)b\}$ and $r<N$, there is a stable heteroclinic cycle on the boundary of the simplex $S_{3}$.
\end{theorem}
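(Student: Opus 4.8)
\emph{Proof plan.} Throughout we stay in the regime $(1-pq)B+pqh-pqb>0$ of this part of the section --- a regime in which, by the preceding analysis, the interior fixed point is unstable whenever it exists. The strategy is in two parts: (i) show that under the stated hypotheses $\partial S_{3}$ is precisely a heteroclinic cycle $C\to D\to P\to C$ joining three hyperbolic saddles, and (ii) decide its stability via the eigenvalue-ratio criterion for boundary heteroclinic cycles of replicator systems~\cite{hofbauer1998evolutionary}.

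First I would fix the boundary flow. On the $CD$ edge ($z=0$) and the $CP$ edge ($y=0$) the computations already recorded in the proof of Theorem~\ref{theorem3.1} give $\dot y=y(1-y)\left(c-\frac{rc}{N}\right)>0$ (using $r<N$) and $\dot x=x(1-x)G>0$, so the orbits run $C\to D$ and $P\to C$. On the $DP$ edge, $P_{P}-P_{D}=\left(\frac{rc}{N}-c-G+pq(N-1)b\right)+(N-1)z\left[(1-pq)B+pqh-pqb\right]$ is affine in $z$; it is positive at $z=0$ by $\frac{rc}{N}-c-G>-pq(N-1)b$ and positive at $z=1$ (where it equals $\frac{rc}{N}-c-G+(N-1)[(1-pq)B+pqh]$) by $\frac{rc}{N}-c-G>-(N-1)[(1-pq)B+pqh]$, hence positive throughout, so $DP$ carries no interior equilibrium and the flow runs $D\to P$. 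The three open edges are therefore heteroclinic connections forming the cycle $C\to D\to P\to C$.

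Next I would verify that $C,D,P$ are hyperbolic saddles and read off, at each one, the eigenvalues along the two incident edges (the Jacobians are exactly those already computed in the text). At $C$: $-G<0$ along the incoming edge $CP$, and $c-\frac{rc}{N}>0$ along the outgoing edge $CD$. At $D$: $\frac{rc}{N}-c<0$ along the incoming edge $CD$, and $\frac{rc}{N}-c-G+pq(N-1)b>0$ along the outgoing edge $DP$ (positive by the $-pq(N-1)b$ half of the hypothesis). At $P$: $-\left[(N-1)\left((1-pq)B+pqh\right)+\frac{rc}{N}-c-G\right]<0$ along the incoming edge $DP$ (negative by the other half of the hypothesis), and $G>0$ along the outgoing edge $CP$. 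All six eigenvalues are nonzero, so each vertex is a hyperbolic saddle with its incoming and outgoing edges as local stable and unstable manifolds. Writing $s_{V},u_{V}>0$ for the magnitudes of the stable and unstable eigenvalues at $V$, this gives $(s_{C},u_{C})=\left(G,\,c-\frac{rc}{N}\right)$, $(s_{D},u_{D})=\left(c-\frac{rc}{N},\,\frac{rc}{N}-c-G+pq(N-1)b\right)$, $(s_{P},u_{P})=\left((N-1)[(1-pq)B+pqh]+\frac{rc}{N}-c-G,\,G\right)$.

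Finally I would apply the return-map criterion. Choose cross-sections transverse to the three edges near their midpoints, with ``distance into $S_{3}$'' as the local coordinate $\eta$. Because at each vertex the two incident edges are the eigendirections and the interior of $S_{3}$ is the positive cone spanned by them, the standard local saddle estimate shows the passage past vertex $V$ sends $\eta$ to $\mathrm{const}\cdot\eta^{\,s_{V}/u_{V}}$; composing the three passages, the Poincar\'e return map satisfies $\Phi(\eta)\sim\mathrm{const}\cdot\eta^{\rho}$ with $\rho=\frac{s_{C}}{u_{C}}\cdot\frac{s_{D}}{u_{D}}\cdot\frac{s_{P}}{u_{P}}$, and its iterates tend to $0$ precisely when $\rho>1$. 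Here the factors telescope: the two occurrences of $G$ and the two of $c-\frac{rc}{N}$ cancel, leaving $\rho=\dfrac{(N-1)[(1-pq)B+pqh]+\frac{rc}{N}-c-G}{pq(N-1)b+\frac{rc}{N}-c-G}$. Both numerator and denominator are positive under the hypotheses, so $\rho>1\iff(N-1)[(1-pq)B+pqh]>pq(N-1)b\iff(1-pq)B+pqh-pqb>0$, which holds by the standing assumption; hence the boundary heteroclinic cycle is asymptotically stable. The main obstacle is the orientation bookkeeping in the last two steps --- matching each eigenvector to the correct incident edge with the correct sign, and using the saddle-passage estimate with the exponent the right way round ($s_{V}/u_{V}$, not $u_{V}/s_{V}$); once these are nailed down, the algebra for $\rho$ and the conclusion are routine.
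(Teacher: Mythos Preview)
Your proposal is correct and follows essentially the same route as the paper: both establish the boundary cycle $C\to D\to P\to C$ from the edge flows, read off the six vertex eigenvalues from the Jacobians already computed, and apply the product-of-saddle-ratios criterion to obtain the same quantity $\rho=\dfrac{(N-1)[(1-pq)B+pqh]+\frac{rc}{N}-c-G}{pq(N-1)b+\frac{rc}{N}-c-G}$ (the paper's $\lambda$), concluding $\rho>1$ from the standing regime assumption $(1-pq)B+pqh-pqb>0$. Your write-up is in fact more explicit than the paper's on two points---the affine argument showing the $DP$ edge carries no interior equilibrium under the hypotheses, and the return-map justification of the ratio criterion---but the underlying argument is the same.
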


\begin{proof}
When $\frac{rc}{N}-c-G>\max\{-(N-1)[(1-pq)B+pqh],-pq(N-1)b\}$ and $r<N$,
we know that the three vertex equilibrium points ($C, D$, and $P$) are all saddle nodes, and the three edges ($CD, DP$, and $PC$) are the heteroclinic trajectories.
All of these guarantee the existence of the heteroclinic cycle on the boundary $S_{3}$. In the following, we will prove that the heteroclinic cycle is asymptotically stable.

Based on the above analysis, we can get the eigenvalues of the Jacobian matrices of the three vertex equilibrium points, namely, $\lambda_{P}^{-}=-(N-1)[(1-pq)B+pqh]-(\frac{rc}{N}-c-G), \lambda_{P}^{+}=G, \lambda_{C}^{-}=-G, \lambda_{C}^{+}=c-\frac{rc}{N}, \lambda_{D}^{-}=\frac{rc}{N}-c,$ and $\lambda_{D}^{+}=\frac{rc}{N}-c-G+pq(N-1)b$, respectively. Then we respectively define that $\lambda_{P}=-\frac{\lambda_{P}^{-}}{\lambda_{P}^{+}}, \lambda_{C}=-\frac{\lambda_{C}^{-}}{\lambda_{C}^{+}},$ and $\lambda_{D}=-\frac{\lambda_{D}^{-}}{\lambda_{D}^{+}}$,
and we have $\lambda=\lambda_{P}\lambda_{C}\lambda_{D}=\frac{\frac{rc}{N}-c-G+(N-1)[(1-pq)B+pqh]}{\frac{rc}{N}-c-G+pq(N-1)b}$. When $(1-pq)B+pq(h-b)>0$, we have $\lambda>1$.
Therefore the heteroclinic cycle is asymptotically stable~\cite{park18}.
\end{proof}

\subsubsection{There is no interior equilibrium point in the system (\ref{system1})} \label{subsection3.2.2}
When $\beta\geq1$, or $\mu\geq1$, or $\beta+\mu\geq1$, the interior fixed point does not exist.

(1) For $0<\alpha<1$, there is a boundary equilibrium point on the $DP$ edge. Then the system has four fixed points in the parameter space, namely, $(0, 0, 1), (1, 0, 0), (0, 1, 0),$ and $(0, 1-\alpha, \alpha)$.

The fixed points $(0, 0, 1)$ and $(1, 0, 0)$ are both unstable since the largest eigenvalues of Jacobian matrices are both positive.

For $(x,y,z)=(0,1,0)$, the Jacobian is
\begin{equation}
J=\begin{bmatrix}
\frac{rc}{N}-c &&& 0\\
-(G-pq(N-1)b) &&& \frac{rc}{N}-c-G+pq(N-1)b
\end{bmatrix},
\end{equation}
thus the fixed equilibrium is a saddle point and unstable when $\frac{rc}{N}-c-G+pq(N-1)b>0$; while when $\frac{rc}{N}-c-G+pq(N-1)b<0$, it is stable.

For $(x, y, z)=(0, 1-\alpha, \alpha)$, the Jacobian is
\begin{equation}
J=\begin{bmatrix}
a_{11} &&& 0\\
a_{21} &&& a_{22}
\end{bmatrix},
\end{equation}
where $a_{11}=G-pq(N-1)(1-\alpha)b, a_{21}=\alpha(1-\alpha)[pq(N-1)h+(1-pq)(N-1)B]-(1-\alpha)[G-pq(N-1)(1-\alpha) b]$, and $a_{22}=\alpha(1-\alpha)(N-1)[(1-pq)B+pqh-pqb]$, thus the fixed equilibrium is stable when $pqb>\max\{\frac{G}{(N-1)(1-\alpha)}, (1-pq)B+pqh\}$, and it is unstable when $pqb<\max\{\frac{G}{(N-1)(1-\alpha)}, (1-pq)B+pqh\}$. Particularly, when $pqb=\max\{\frac{G}{(N-1)(1-\alpha)}, (1-pq)B+pqh\}$, we find that one eigenvalue of the Jacobian matrix at this fixed point is zero and the other one is negative. Accordingly, we study its stability by further using the center manifold theorem~\cite{Khalil96} as follows.

For $0<\alpha<1$, we know $pqb\neq (1-pq)B+pqh$ and $\max\{\frac{G}{(N-1)y}, (1-pq)B+pqh\}=\frac{G}{(N-1)y}$. In order to use the center manifold theorem conveniently, we take $\xi =y-1+\alpha$, then the equation system becomes
\begin{eqnarray*}
\left\{
\begin{aligned}
\dot{x}&=x(1-x)[G-pq(N-1)(1+\xi-\alpha)b]-x(1+\xi-\alpha)[-(1-pq)\\
&B(N-1)(\alpha-x-\xi)-pq(N-1)(\alpha-x-\xi)h-rc/N+c\\
&+G-pq(N-1)b(1+\xi-\alpha)],\\
\dot{\xi}&=(1+\xi-\alpha)(\alpha-\xi)[-(1-pq)B(N-1)(\alpha-x-\xi)\\
&-pq(N-1)(\alpha-x-\xi)h-rc/N+c+G-pq(N-1)b(1+\xi-\alpha)]\\
&-x(1+\xi-\alpha)[G-pq(N-1)b(1+\xi-\alpha)].
\end{aligned}
\right.
\end{eqnarray*}
We further let $M$ be a matrix whose columns are the eigenvectors of $J(0, 1-\alpha, \alpha)$ which can be written as
\begin{equation*}
M=\begin{bmatrix}
\frac{a_{11}-a_{22}}{a_{21}} &&& 0\\
1 &&& 1
\end{bmatrix}.
\end{equation*}
Then we have
\begin{equation*}
M^{-1}J(0, 1-\alpha, \alpha)M=\begin{bmatrix}
0 &&& 0\\
0 &&& a_{22}
\end{bmatrix}.
\end{equation*}
We further take $[u \quad v]^{T} = M^{-1}[x \quad \xi]^{T}$, and thus we have $u = x / \theta$ and
$v = \xi-x / \theta$
where we use the notation $\theta=\frac{a_{11}-a_{22}}{a_{21}}$. It leads to
\begin{eqnarray*}
\dot{u}&=&u(1-\theta u)[-(u+v)pq(N-1)b]-u(1+u+v-\alpha)\\
&\times&[-(1-pq)B(N-1)(\alpha-\theta u-u-v)-pq(N-1)(\alpha-\theta u-u-v)h\\
&-&rc/N+c-pq(N-1)b(u+v)].
\end{eqnarray*}
Using the center manifold theorem, we have that $v = e(u)$ is a center manifold. We assume that $e(u) = O(|u|^{2})$, which yields the reduced system
\begin{eqnarray*}
\dot{u}&=-u^{2}pq(N-1)b+\theta u^{3}pq(N-1)b+O(|u|^{4}).
\end{eqnarray*}
Since $-pq(N-1)b\neq0$, we know that the fixed point $(u, e(u))=(0, 0)$ is unstable for the reduced system. Accordingly, the fixed point $(0, 1-\alpha, \alpha)$ is unstable.

(2) For $\alpha\leq0$ or $\alpha\geq1$, the boundary equilibrium point of the $DP$ edge does not exist. Then the system has only three fixed points in the parameter space, namely, $(0, 0, 1), (1, 0, 0),$ and $(0, 1, 0)$.

The fixed point $(0, 0, 1)$ and $(1, 0, 0)$ are both unstable since the largest eigenvalues of corresponding Jacobian matrices are both positive. If $\frac{rc}{N}-c-G+pq(N-1)b<0$, the fixed point $(0, 1, 0)$ is stable, while it is a saddle point and becomes unstable when $\frac{rc}{N}-c-G+pq(N-1)b>0$. Particulary, when $rc/N-c-G+pq(N-1)b=0$ and $(1-pq)B+pqh-pqb\neq0$, it is unstable.

\section{Numerical examples}

\begin{figure*}[!t]
\begin{center}
\includegraphics[width=5in]{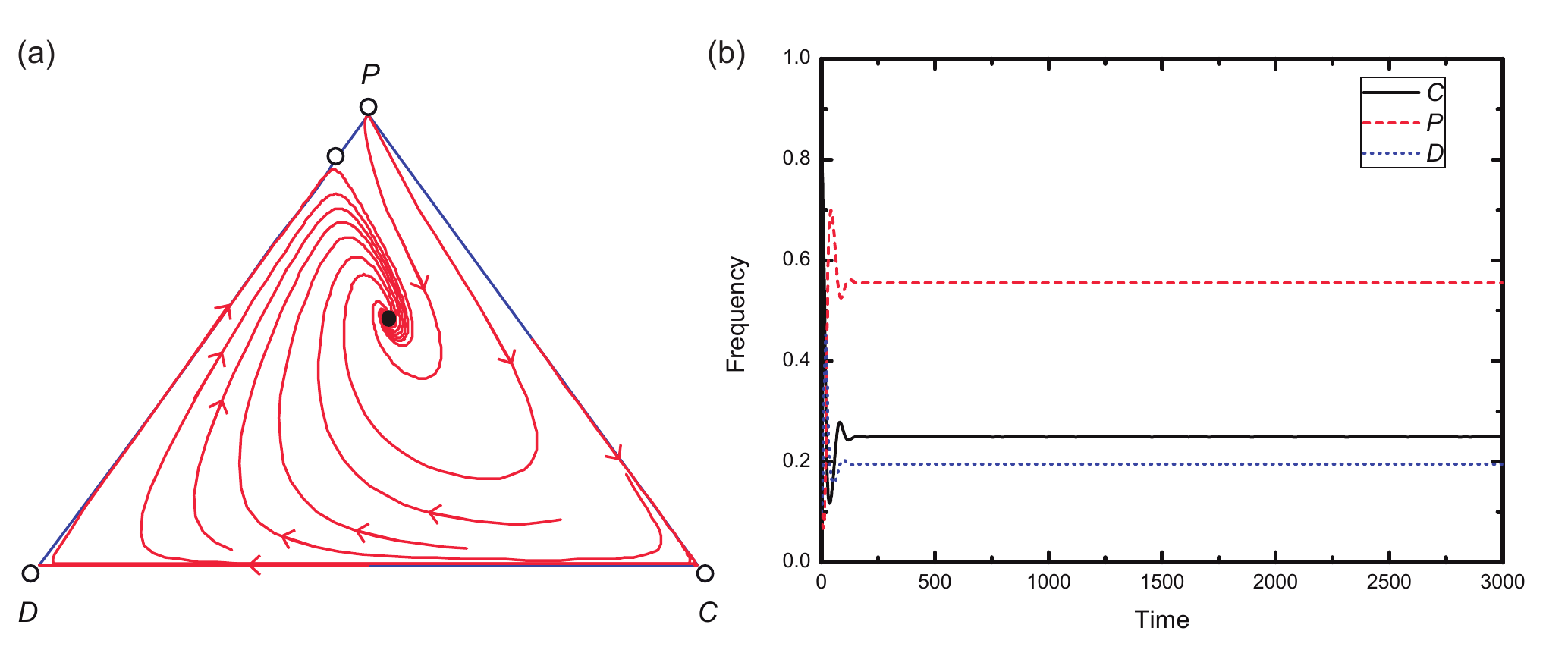}
\caption{The evolution of cooperation in an environment in which the expected loss of defectors is less than the expected bribe amount received by punishers. The triangle represents the state space where the actual fractions of cooperators, defectors, and punishers are denoted $\Delta=\{(x, y, z): x, y, z\geq0$ and $x+y+z=1\}$. Filled circle represents a stable fixed point whereas open circles represent unstable fixed points. Panel~(a) depicts that a stable interior equilibrium appears in the simplex $S_{3}$, which means that these three strategies coexist by maintaining a stable fraction in the population. Panel~(b) depicts the time series of the frequencies of three strategies $C$ (cooperators, black solid line), $D$ (defectors, blue dot line), and $P$ (pool punishers, red dash line). After an initial transient the frequencies of the three strategies eventually stabilize. Initial conditions: $(x, y, z)=(0.8, 0.1, 0.1).$ Parameters: $N=5$, $r=3$, $c=1$, $G=0.5$, $B=0.5$, $h=0.1, b=0.8, q=1$, and $p=0.8$.}\label{fig1}
\end{center}
\end{figure*}

We now provide some numerical examples to confirm the above theoretical analysis. We use the simplex $S_{3}=\{(x, y, z): x, y, z\geq 0, x+y+z=1\}$ to depict the state space of above three strategies. Accordingly, the three homogeneous states $C$ ($x=1$), $D$ ($y=1$), and $P$ ($z=1$) correspond to three corners of the simplex $S_{3}$. All of these are equilibrium points of the system (\ref{system1}). We first present numerical cases when the system (\ref{system1}) has an interior equilibrium point. When $\beta<1, \mu<1$, and $\beta+\mu<1$, there is an interior fixed point. From the theoretical analysis we know that its stability is determined by the relationship between the expected loss $(1-pq)B+pqh$ of defectors and the expected bribe amount $pqb$ received by punishers.

\subsection{Stable interior equilibrium point}

In the following we present a numerical example to confirm that the system can have a stable interior equilibrium point. As shown in Fig.~\ref{fig1}(a), we see that there exist five equilibria in the simplex $S_{3}$. The existing interior equilibrium point is stable and all interior orbits converge to this point, irrespective of the initial conditions. Besides, a boundary equilibrium point between all defectors and all punishers appears on the edge $DP$, which is unstable. Furthermore, the direction of the evolution on the $CP$ edge is from $P$ to $C$, and from $C$ to $D$ on the $CD$ edge. Figure~\ref{fig1}(b) depicts the time evolution of the frequencies of cooperators, defectors, and punishers. We can see that all mentioned strategists can coexist when the system reaches the steady state. Furthermore, the fraction of punishers is the highest, and the fraction of defectors is the lowest.

It is worth noting that all these results above are consistent with the theoretical results. In particular, when $(1-pq)B+pqh-pqb<0$, the interior equilibrium point is stable. And the boundary fixed point with $z=\alpha$ on $DP$ edge can appear when $0<\alpha<1$. According to our theoretical analysis, we know that it is unstable since $G-pq(N-1)(1-\alpha)b>0$ (see Theorem~\ref{T3.1} and Sec.~\ref{subsection3.2.1}).

\subsection{Hamiltonian system}
\begin{figure*}[!t]
\begin{center}
\includegraphics[width=5in]{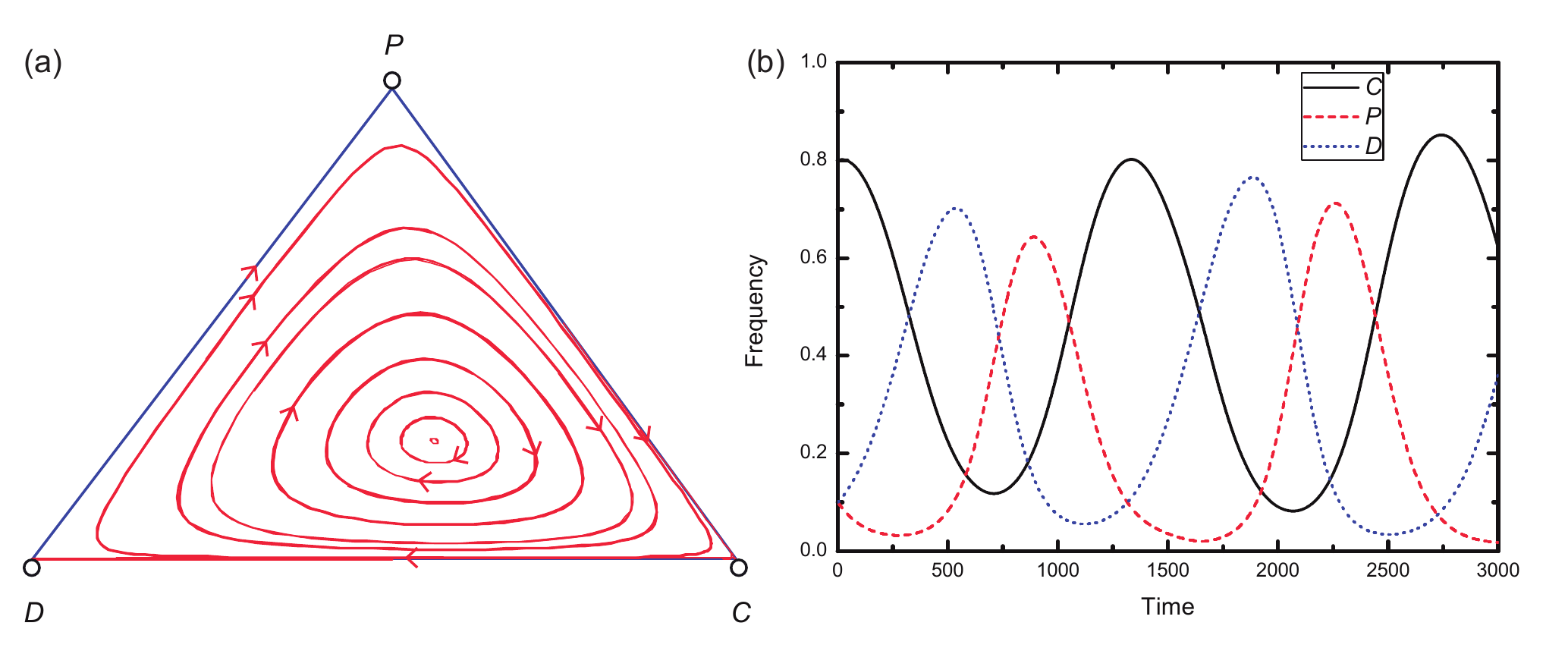}
\caption{The evolution of cooperation in a corrupt environment in which the expected loss of defectors and the expected bribe amount received by punishers are equal.
Panel~(a) depicts that there is an interior fixed point which is surrounded by periodic orbits in the simplex. Panel~(b) depicts a representative behavior when strategies are oscillating among $C$, $D$, and $P$ states. Initial conditions: $(x, y, z)=(0.8, 0.1, 0.1)$. Parameters: $N=5$, $r=3$, $c=1$, $G=0.5$, $B=0.5$, $h=0.4, b=0.4, q=1$, and $p=1$.}\label{fig2}
\end{center}
\end{figure*}

In this section we provide a numerical example to confirm that the system can become a Hamiltonian system. In Fig.~\ref{fig2}(a), we see that the existing interior equilibrium point is a center surrounded by periodic orbits and the strategy evolution trajectory displays a limit cycle. No boundary fixed points can be detected on the three edges. And the direction of evolution on the $DP$ edge is from $D$ to $P$, from $P$ to $C$ on the $PC$ edge, and from $C$ to $D$ on the $CD$ edge. Figure~\ref{fig2}(b) depicts that the frequencies of these three strategies display periodic oscillations in dependence on time, which is corresponding to the limit cycle shown in Fig.~\ref{fig2}(a). All these results are in agreement with the theoretical prediction, namely, when $(1-pq)B+pqh-pqb=0$ the interior fixed point is neutrally stable, and the dynamic system is Hamiltonian (see Theorem~\ref{theorem3.2} of Sec.~\ref{section3}).

The cyclical evolutionary scenario can be described as follows. If most players are cooperators in the group, it is better to become a defector due to the social dilemma. If defectors are prevalent, corrupt officials can get a lot of bribes from a group of defectors, and thus the number of punishers increases. If most players are punishers, the bribes from a few defectors are usually small enough to subvert cooperators dominance over punishers, and thus cooperators spread. If the number of cooperators increases sufficiently, then the original cooperation dilemma returns. It is worth noting that cooperative behavior can still emerge even in a completely corrupt environment. As we set that $p=q=1$, which means that all defectors are willing to offer a bribe to punishers, and all punishers are corrupt officials. It is inspiring to see that altruistic behavior can still be maintained due to the oscillations recurrent increase in cooperation.

\subsection{Heteroclinic cycle}
\begin{figure*}[!t]
\begin{center}
\includegraphics[width=5in]{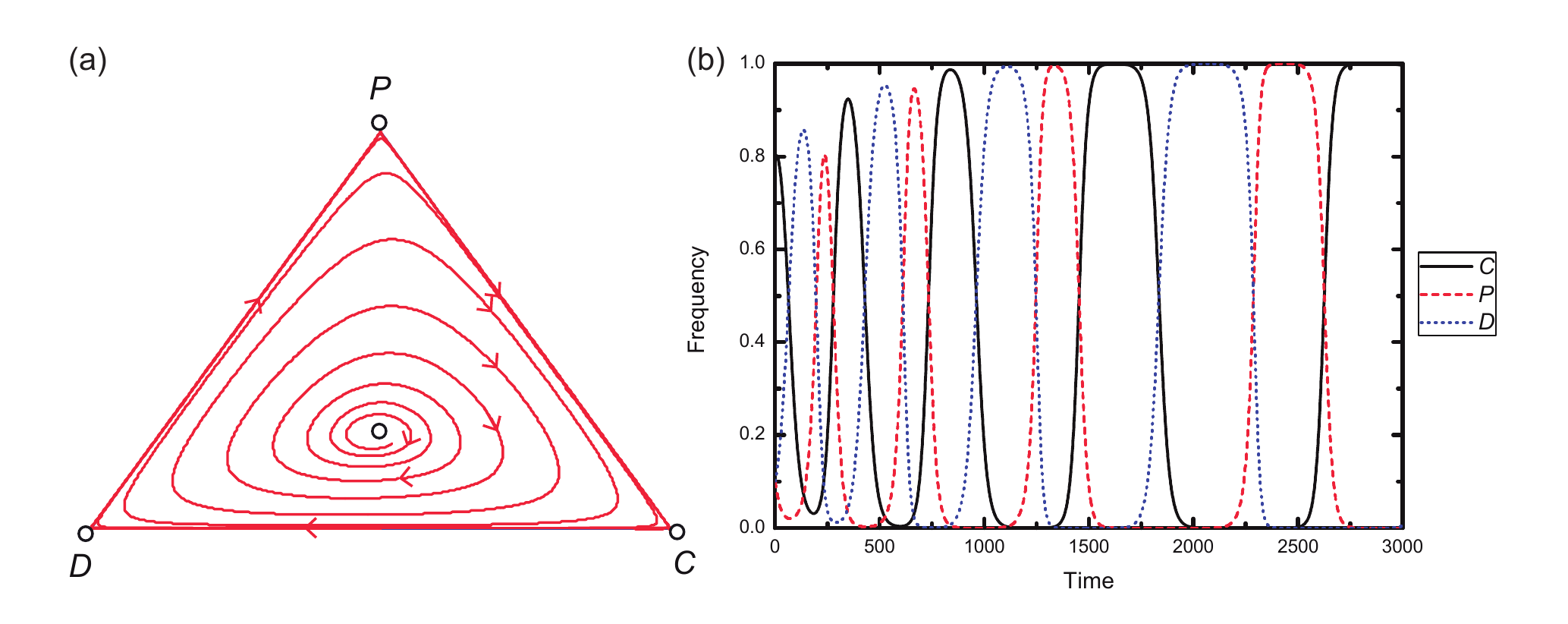}
\caption{The evolution of cooperation in an environment in which the expected loss of defectors is larger than the expected bribe amount received by punishers. Panel~(a) depicts that the interior equilibrium point turns into a repeller, and the interior curves of simplex $S_{3}$ coverage to the boundary of $S_{3}$. Panel~(b) depicts that the frequencies of these three strategies display growing periodic oscillations. Initial conditions: $(x, y, z)=(0.8, 0.1, 0.1).$ Parameters: $N=5$, $r=3$, $c=1$, $G=0.5$, $B=0.5$, $h=0.4, b=0.4, q=0.8$, and $p=1$.}\label{fig3}
\end{center}
\end{figure*}

We provide a numerical example to confirm that the heteroclinic cycle can exist in our equation system. Figure~\ref{fig3}(a) shows that there are four unstable fixed points in the simple $S_{3}$, and all interior curves of state space converge to the boundary of $S_{3}$. Besides, the direction of the evolution on the $DP$ edge is from $D$ to $P$, on the $PC$ edge is from $P$ to $C$, while on the $CD$ edge is from $C$ to $D$. Figure~\ref{fig3}(b) depicts that the frequencies of the three strategies are oscillating and the amplitudes are gradually growing, eventually forming periodic oscillations. This means that these three strategies can be cyclically dominant in the population and can thus avoid the collapse of cooperation. Note that these numerical results are in agreement with the theoretical analysis which predicts that the population converges to a stable heteroclinic cycle on the boundary of $S_3$ for $(1-pq)B+pqh-pqb>0$, $\frac{rc}{N}-c-G>\max\{-(N-1)[(1-pq)B+pqh],-pq(N-1)b\}$ and $r<N$ (Theorem~\ref{theorem3.4} of Sec.~\ref{section3}).

\subsection{Global stability of all-D state}
\begin{figure*}[!t]
\begin{center}
\includegraphics[width=5in]{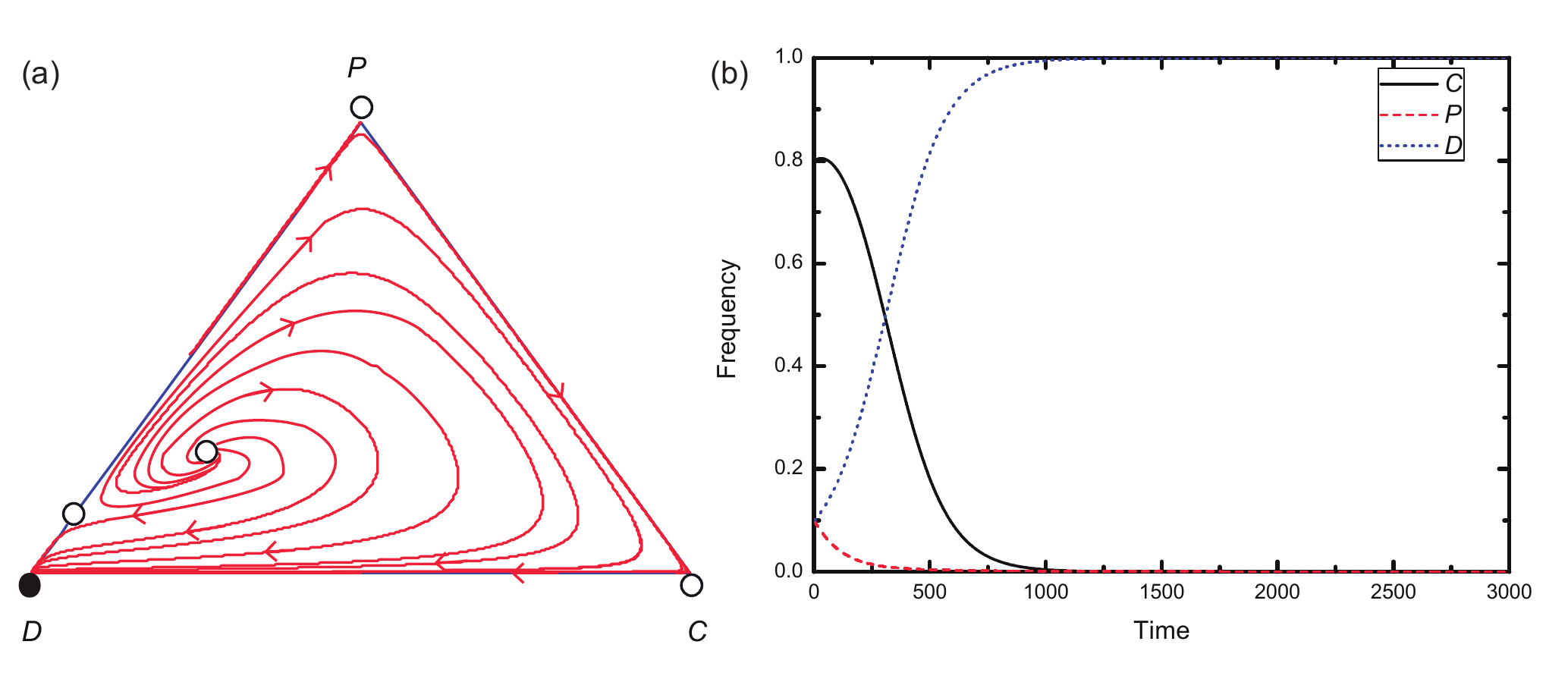}
\caption{The evolution of cooperation in an environment in which the expected loss of defectors is larger than the expected bribe amount received by punishers. Panel~(a) depicts that the defection has an evolutionary advantage over the two other strategies, and ultimately occupies the entire population. Panel~(b) illustrates the time evolution of cooperation, defection, and punishment in the whole population. Initial conditions: $(x, y, z)=(0.8, 0.1, 0.1).$ Parameters: $N=5$, $r=3$, $c=1$, $G=0.5$, $B=0.5$, $h=0.4, b=0.2, q=1$, and $p=1$.}\label{fig4}
\end{center}
\end{figure*}

Next we provide a numerical example to confirm that the equation system can converge to the all-$D$ state with global stability when the interior equilibrium point is unstable. As shown in Fig.~\ref{fig4}(a), there are five fixed points in the simplex $S_{3}$. All interior orbits coverage to the vertex $D$, which is a global stable equilibrium point. In the boundary case the unstable equilibrium point on the $DP$ edge reappears. Besides, defectors can always do better than cooperators on the $CD$ edge, and cooperators have more advantages than punishers on the remaining $CP$ edge. In Fig.~\ref{fig4}(b), we show how the frequencies of the three strategies evolve with time when the initial fractions of cooperators, defectors, and punishers are $0.8$, $0.1$, and $0.1$, respectively. Finally the fraction of defectors reaches one. This numerical calculation is in agreement with our theoretical prediction that the system converges on the all-$D$ state for $\frac{rc}{N}-c-G+pq(N-1)b<0$ (Sec.~\ref{subsection3.2.1}).

\subsection{Boundary equilibrium point with the coexistence of defectors and punishers}
\begin{figure*}[!t]
\begin{center}
\includegraphics[width=5in]{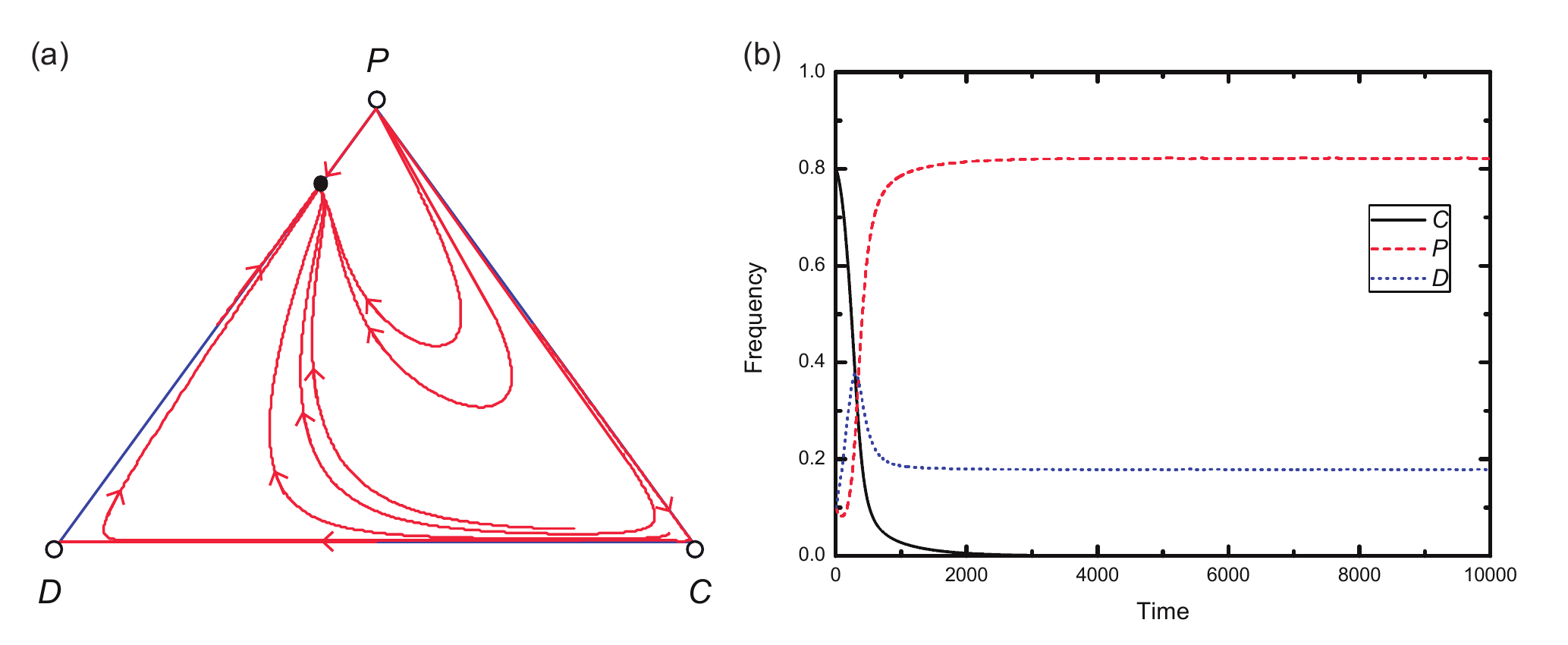}
\caption{The evolution of cooperation in a corrupt environment in which the expected loss of defectors is less than the expected bribe amount received by punishers. Panel~(a) depicts that there exists a stable boundary equilibrium point on $PD$ edge, and all trajectories starting from different initial conditions terminate onto this point. Panel~(b) illustrates how the fractions of cooperators, defectors, and punishers evolve in the population.
Initial conditions: $(x, y, z)=(0.8, 0.1, 0.1).$ Parameters: $N=5$, $r=3$, $c=1$, $G=0.5$, $B=0.5$, $h=0.1, b=0.8, q=1$, and $p=1$.}\label{fig5}
\end{center}
\end{figure*}

Finally we present a numerical example to confirm that there exists a stable boundary equilibrium point with the coexistence of defectors and punishers. In Fig.~\ref{fig5}(a), we show that the system converges to the stable boundary equilibrium point on the $DP$ edge, irrespective of the initial conditions, which means that defectors and punishers can coexist permanently in the population, while cooperators disappear. Besides, the evolutionary direction on $CD$ edge is from $C$ to $D$, while it is from $P$ to $C$ on the $CP$ edge. Figure~\ref{fig5}(b) shows that the fraction of cooperators gradually reduces to zero, even though their initial frequency is relatively high. In parallel the fraction of punishers finally reaches about $0.8$, while the fraction of defectors converges to about $0.2$. These numerical results validate our theoretical analysis based on which as long as the interior equilibrium does not exist, the boundary fixed point on the $DP$ edge is globally stable (Sec.~\ref{subsection3.2.2}).

\section{Conclusions}

In this paper, we have introduced probabilistic corrupt enforcers and violators into the PGG and investigated their consequence on the evolution of cooperation and punishment in infinite well-mixed populations. As a result, we have observed basically two different evolutionary scenarios. Namely, the competing strategies either coexist by forming stable time-dependent fractions or they dominate each other in a rock-scissors-paper game like manner. But for both cases the cooperative behavior can be well preserved. We have further identified the conditions in which the two dynamic behaviors can appear. We find that when the expected loss is less than the related bribe amount, a stable coexistence state among these three strategists can appear. In the alternative case, when these two quantity values are identical, the replicator dynamical system can be reduced to a Hamiltonian system. Here a center surrounded by closed orbits appears in the interior of the simplex $S_{3}$, which means that these three strategies are mutually restrained and exhibit periodic oscillations. When the expected loss is larger than the expected bribe amount, the interior fixed point is unstable, and two different evolutionary dynamic behaviors can emerge, namely, stable heteroclinic cycle and global attractor $D$. The former can guarantee that cooperators, defectors, and punishers are cyclically dominant in the population, while the latter will lead to the collapse of cooperation.

The key feature of our model is that corruption may emerge stochastically, either from the side of violators or from corrupt enforcers. This realistic assumption can be modeled in infinite well-mixed populations by using replicator equations. The study of such highly nonlinear equation systems is really challenging~\cite{Hauert02}, but we could manage a comprehensive and systematic theoretical analysis. We could identify all equilibrium points and characterize their stability by appropriately linearizing the equation system. In particular, when there are pure imaginary among the eigenvalues of the Jacobian we cannot determine the stability of an equilibrium point based only on the eigenvalues. Instead, we investigate its stability by utilizing the center manifold theorem~\cite{Carr81,Khalil96}. Furthermore, we not only prove that the system can exhibit the central limit cycle, but also theoretically confirm the existence and stability of heteroclinic cycle.

Recently, Huang et al.~\cite{Huang2018} investigated the effect of corruption on the evolution of cooperation and punishment in a hierarchical society which is divided into civilians and cops. Our present work, however, focuses on the evolutionary dynamics of cooperation, defection, and punishment in an integrated society when corruption is possible. Importantly, we consider that defectors probabilistically bribe enforcers to avoid punishment and meanwhile pool punishers probabilistically accept a bribe from defectors, which can reflect a realistic behavior manner of individuals in human societies. This behavior was studied in a population where individuals play the PGG with pool punishment~\cite{Sigmund2011Social}, and thus provides an alternative, but reasonable route to study the effects of corruption on collective actions of cooperation. Rather unexpectedly, the introduction of corruption can stabilize cooperation, and more interesting dynamic behaviors can be derived which may provide some implications for designing sanctioning strategies to support the evolution of cooperation.

We point out that although cooperative behavior can occur in a corrupt environment, the long-term existence of corruption will seriously
harm the economy and society as a whole. For example, one cross-cultural experiment involving thousands of people worldwide showed that
corruption not only deprives people of economic prosperity and growth, but also jeopardizes their intrinsic honesty~\cite{Salvi16}. We should stress that our present model does not explicitly consider a strategy chance of an anti-corruption control which could be an additional source of cooperation support. Such kind of extension could be the scope of future research. On the other hand, however, although powerful anti-corruption monitoring and sanctioning have been used to resolve the corruption problem, the effects of such measures are either transient or uncertain~\cite{Abdallah14,Lee2017,Muthukrishna2017,Verma2017,Huang2018}. For example, Muthukrishna et al.~\cite{Muthukrishna2017} found that anti-corruption strategies are effective under some conditions, but can further decrease public good provisioning when leaders are weak and the economic potential is poor even if these strategies are powerful. The failure of anti-corruption strategies is related to many independent factors, and the most important factor may be that the reason why enforcers are apt to fall into corruption is unclear. A recent experimental study has revealed the reason why the upper-class individuals behave more unethically than lower-class individuals is that they are more favorable attitudes toward greed~\cite{piff2014pnas}. Consequently, the task how to design efficient anti-corruption strategies for resisting the occurrence of corruption from the source remains an open and challenging question.

\section*{Acknowledgments}
This research was supported by the National Natural Science
Foundation of China (Grant Nos. 61976048 and 61503062), and by the Hungarian National Research Fund (Grant K-120785), and by the Science Strength Promotion Programme of University of Electronic Science and Technology of China.

\end{document}